\newtheorem{thm}{Theorem}[section]
\newtheorem{cor}[thm]{Corollary}
\theoremstyle{definition}
\theoremstyle{remark}
\numberwithin{equation}{section}
\newcommand{\N}{\mathbb{N}}
\newcommand{\Z}{\mathbb{Z}}
\renewcommand{\phi}{\varphi}
\DeclareMathOperator{\rat}{Rat}
\newcommand{\pvali}{\vspace{12pt}}
\newcommand{\vali}{\vspace{6pt}}
\newcommand{\bin}{\{0,1\}}
\begin{document}
\title{Undecidability in Finite Transducers, Defense Systems and Finite
Substitutions}

\author{Vesa Halava\thanks{Supported by emmy.network foundation
under the aegis of the Fondation de Luxembourg.}
\\ Department of Mathematics and Statistics\\ University of Turku, Finland\\
Email: \texttt{vesa.halava@utu.fi}
}
\date{October 1997}

\maketitle

\begin{abstract}
In this manuscript we present a detailed proof for undecidability of the equivalence of finite substitutions on regular language $b\{0,1\}^*c$. The proof is based on the works of Leonid P. Lisovik. 
\end{abstract}

\section{Introduction and history}

This manuscript was written during the summer of 1997 while the author worked as a research assistant in Prof. Juhani 
Karhum\"aki's project. The task for the summer was to read and verify in details the proof of undecidability of the equivalence problem for finite substitutions on regular languages proved by Prof. Leonid P. Lisovik from Kiev, Ukraine. As a result the author wrote the present manuscript  based on articles~\cite{Lis:83, Lis:91,Lis:97}. In the original articles a lot of details were left to the reader. 

The main motivation for the manuscript was that Lisovik in~\cite{Lis:97} was able to prove that the equivalence problem problem for finite substitutions was undecidable already for a quite simple regular language $b\{01,1\}^* c$, see Section~\ref{sec:eq}. Lisovik's proof for this language was simplyfied by Halava and Harju \cite{HaHa:99} using the undecidability of the universe problem in integer weighted finite automata instead of the undecidability track of Lisovik's from the inclusion problem of finite transducers (Section~\ref{trans}) through undecidability in so called defence systems defined by Lisovik himself (Section~\ref{defe}). Note that the regular language with undecidable equivalence problem for finite substitutions was later improved by Karhum\"aki and Lisovik\footnote{It needs to be mentioned that Lisovik was a frequent visitor of Karhum\"aki's group in Turku around that time. Many stories of his peculiar but extremely friendly behaviour are still told in Turku.  The author remembers particularly well the party after the defence of his PhD thesis in April 2002 where Lisovik participated, not with any official role on the defence, but as a quest as he happened to visit Turku at that time: Lisovik gave altogether almost ten speeches during the dinner and the topics of these speeches varied somewher between math, life and basketball. For the sake of honesty it must be told that  after the first five speeches, Lisovik was encouraged by author's official supervisor Prof. Tero Harju to give more speeches. Naturally, the author is grateful for both, especially, because according to the official protocol of the party, the PhD candidate has to reply to all the speeches given with a new speech. } \cite{KaLi1} in 2002 (alternatively, see~\cite{KaLi2})  to the language $ab^*c$, and, further, by Kunc~\cite{Kunc} in 2007 to the language $a^*b$.   

As mentioned above, the root of undecidability in Lisovik's proof is the undecidability of the inclusion of two
rational relations (recognized by finite transducers), the result which was originally proved by Ibarra~\cite{Iba:78}. Lisovik gave a new proof for this result in 1983  (see~\cite{Lis:83}) with a clever reduction from the Post Correspondence Problem. Indeed, the main motivation for publishing this manuscript now 24 years later lays on this proof, as it has not been published in this form before. Recently, in \cite{handbook} Harju and Karhum\"aki presented a version of this proof with citation to this manuscript.

\section{Finite transducers}\label{trans}

Let $\Sigma$ be an alphabet and denote by $\epsilon$ the empty word.
The star operation on $\Sigma$, $\Sigma^*$, is as usual the set of
all word over $\Sigma$. Denote by $\Sigma^+=\Sigma^*\setminus \{\epsilon\}$.

We begin with a definition of {\it finite transducer}, FT for short,
which is a 6-tuple $(Q,\Sigma,\Delta,E,q_0,F)$, where

\begin{itemize}
\item $Q$ is a finite set of states,

\item $\Sigma$ and $\Delta$ are input and output alphabets,

\item $E\subseteq Q\times \Sigma^*\times \Delta^*\times Q$ is a finite
set of transitions,

\item $q_0\in Q$ is the initial state and $F\subseteq Q$ is the set of
final states.

\end{itemize}

FT is a finite automaton with output. If the underlying
automaton is nondeterministic, then FT is called {\it generalized
sequential machine}, GSM for short, or {\it sequential transducer}.

Let $T$ be a finite transducer. Define the set
\begin{align*}
O(T)=&\{(w,y)\mid w=a_0\dots a_n, \quad y=b_0\dots b_n, \quad n\in\N, \quad
a_i\in\Sigma^*,\\
&b_i\in \Delta^*,\quad 0\le i\le n,
\text{ and there exists states }q_i\in Q,\text{ such that }\\
&(q_i,a_i,b_i,q_{i+1})\in E \text{ and }q_{n+1}\in F\}
\end{align*}
If  $(w,y)\in O(T)$, then we say that $(w,y)\in \Sigma^*\times \Delta^*$ is
recognized by $T$.

Let
$$
L(T)=\{w\mid (w,y)\in O(T) \text{ for some }y \}
$$
be the language accepted by the finite transducer $T$.

A subset $O(T)$ of $\Sigma^*\times \Delta$, which is recognized by
a FT $T$ is called {\it rational relation}. We denote the family of rational
relations of $\Sigma^*\times \Delta^*$ by $\rat(\Sigma^*\times\Delta^*)$.

It is clear that if $A,B\in \rat (\Sigma^*\times \Delta^*)$, then
\begin{align*}
A&\cup B\quad\text{ and }\\ A\cdot B=AB&=\{(w_1w_2,y_1y_2)\mid
(w_1,y_1)\in O(A), (w_2,y_2)\in O(B)\}
\end{align*}
are in $\rat (\Sigma^*\times\Delta^*)$. The union is clear, since
we may connected the FT's that recognize $A$ and $B$, by merging
their initial states of FT's recognizing $A$ and $B$. The product
$AB$ is recognized, by an FT, where we define every final state of
FT recognizing $A$ to be a initial state of the FT recognizing $B$.

The star operation for subset $U$ of $\Sigma^*\times \Delta^*$ is
defined naturally by
$$
U^*=\bigcup_{i\ge 0} U^i,
$$
where $U^i$ is the $i$'th power of $U$ defined using the product by initial values $U^0=\{\epsilon\}\times \{\epsilon\}$,  
$U^1=U$, and $U^{i+1}=UU^i$ for all $i\ge 1$.

\vali
We shall next prove that the equivalence and inclusion of two
rational relations is an undecidable problem  in the case where
$\Delta$ is unary. This result has many proofs, for example cf.
\cite{Iba:78}, \cite{Lis:83}. We shall here present the construction
from \cite{Lis:83}.

Before the theorem, recall that the {\it Post Correspondence Problem}, PCP
for short, which asks for a given pairs of non-empty words over
alphabet $\Gamma$,
$(u_1,v_1),(u_2,v_2),\dots,(u_n,v_n)$, whether there exists a sequence
$$
1\le \alpha_1,\alpha_2,\dots, \alpha_s\le n
$$ such that
$$
u_{\alpha_1}u_{\alpha_2}\dots u_{\alpha_s}=v_{\alpha_1}v_{\alpha_2}\dots
v_{\alpha_2},
$$
is known to be an undecidable problem. For more details about the PCP, cf. \cite{Post:46}, \cite{morphisms}.

\begin{thm}\label{lause}
Let $A$ and $B$ be two rational relations from $\rat (\Sigma^*\times c^*)$.
Then it is undecidable, whether
\begin{align*}
1) \quad A&\subseteq B,
\\ 2) \quad A&=B.
\end{align*}
\end{thm}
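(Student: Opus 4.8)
The idea is to reduce the Post Correspondence Problem to the complement of the inclusion problem: given an instance $(u_1,v_1),\dots,(u_n,v_n)$ over $\Gamma$, I will construct two rational relations $A,B\in\rat(\Sigma^*\times c^*)$ such that $A\subseteq B$ fails if and only if the PCP instance has a solution. Since PCP is undecidable, this settles part 1); part 2) follows immediately because $A\subseteq B$ is equivalent to $A\cup B = B$, and $A\cup B$ is again a rational relation with unary output by the closure properties established above, so an algorithm for equivalence would decide inclusion.

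\textbf{The encoding.} Let $\Gamma=\{a_1,\dots,a_k\}$ and fix a coding of letters as blocks, say $a_j\mapsto 0^j1$, extended to a morphism $\sigma$ on $\Gamma^*$. An input word will have the shape $b\,\beta_1\beta_2\cdots\beta_s\,c$ where each $\beta_i$ encodes an index $\alpha_i\in\{1,\dots,n\}$ together with (the codings of) the blocks $u_{\alpha_i}$ and $v_{\alpha_i}$; the output alphabet is the single letter $c$, so a transducer can only ``measure lengths''. Take $A$ to be the full relation pairing every syntactically well-formed input word $w$ with $c^{|w|}$ (or with $c^m$ for every $m$ up to some bound — the exact normalization is a routine choice), so $A$ is easily recognized by a one-state-ish FT. The relation $B$ is designed to cover a pair $(w,c^m)$ \emph{unless} $w$ codes a genuine PCP solution and $m$ is the ``critical'' length that a solution forces; equivalently $B$ is a finite union of FTs, each responsible for catching one possible way in which $w$ \emph{fails} to be a solution: wrong syntax, a mismatch between the $u$-track and the $v$-track at some position, or a length discrepancy between $\sigma(u_{\alpha_1}\cdots u_{\alpha_s})$ and $\sigma(v_{\alpha_1}\cdots v_{\alpha_s})$. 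Nondeterminism lets $B$ guess which defect is present and which position witnesses it, and the unary output is used to record a counter (the position index) so that the two tracks can be compared one symbol at a time across a single left-to-right pass.

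\textbf{Correctness.} One then checks: if $w$ is \emph{not} a PCP solution (including ill-formed $w$), some defect occurs, the corresponding component of $B$ accepts $(w,c^m)$ for the relevant $m$, so every pair in $A$ is in $B$; conversely if $w$ \emph{is} a solution, all mismatch-detectors fail on the critical length, so a particular pair $(w,c^{m_0})\in A$ escapes $B$, giving $A\not\subseteq B$. Hence $A\subseteq B \iff$ the PCP instance has no solution.

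\textbf{Main obstacle.} The delicate point is making the comparison of the two words $\sigma(u_{\alpha_1}\cdots u_{\alpha_s})$ and $\sigma(v_{\alpha_1}\cdots v_{\alpha_s})$ detectable by a finite transducer whose \emph{only} output resource is the length of a unary word. A transducer cannot directly check equality of two long words; the trick is that to \emph{refute} equality it suffices to exhibit one position where the symbols differ (or to exhibit that the lengths differ), and ``position'' can be encoded as a number, i.e.\ as the length of a unary output block. So $B$ must, in one pass over $w$, nondeterministically select an index $t$, walk to the $t$-th coded symbol of the $u$-track and of the $v$-track, verify they disagree, and emit exactly $t$ copies of $c$; arranging the bookkeeping so that exactly the right unary length is produced — and that \emph{no} component of $B$ accidentally produces that length when $w$ is a true solution — is the part that requires care. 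Getting the block-coding $\sigma$ and the synchronization of the two tracks right so that this length-based comparison is both sound and complete is where essentially all the work lies; the closure-property reduction from 1) to 2) is immediate.
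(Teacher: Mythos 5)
Your overall strategy---reduce from PCP so that inclusion fails exactly when the instance has a solution, build $B$ as a union of nondeterministic ``defect detectors'', exploit the unary output only through length arithmetic, and obtain 2) from 1) via $A\subseteq B\iff A\cup B=B$---is the same as the paper's. The genuine gap is precisely at the point you yourself flag as ``where essentially all the work lies'': you never specify the length-balancing scheme that makes a mismatch certifiable by total output length alone, and the one concrete recipe you do give fails as stated. If the component of $B$ that guesses a mismatch at position $t$ ``emits exactly $t$ copies of $c$'', it produces the pair $(w,c^{t})$, not the pair $(w,c^{|w|})$ that your relation $A$ attaches to $w$; so this component covers none of the pairs of $A$ it is supposed to cover, $A\subseteq B$ fails for trivial reasons even on non-solutions, and the reduction collapses. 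What is actually needed---and what you dismiss as ``a routine choice of normalization''---is a per-letter weighting of the output such that a detector's total output length equals the target length fixed by $A$ \emph{precisely when} the guessed mismatch positions on the two tracks are consistent, together with the soundness half: when $w$ does encode a solution, no detector can hit the target length. Neither point is routine; the second is the case analysis that occupies most of the paper's proof.

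For comparison, the paper arranges this as follows. The input carries a guessed candidate word: it has the form $i_{\alpha_1}\cdots i_{\alpha_s}w_1$ with $w_1\in\{a,b\}^+$, and the reference relation $L_0$ outputs $c$ per index letter and $c^2$ per letter of $w_1$, so the target length is $s+2|w_1|$. The candidate is compared separately against the $u$-side and the $v$-side ($B=L_u\cup L_v$), each side using three detector families: $L_4$ ($w_1$ too long), $L_3$ ($w_1$ too short; the selected index outputs $c^{j}$ with $j\le k_\beta$ while earlier indices output $c^{k_\alpha+1}$), and $L_5$ (correct length but some block $\mu\neq u_{\beta}$; earlier indices output $c^{k_\alpha+1}$, letters of $w_1$ output $c$ before the guessed block and $c^2$ from it onward). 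The totals then differ from the target by $\sum_{\ell<\beta}k_{\alpha_\ell}-|\gamma_1|$, where $\gamma_1$ is the prefix of $w_1$ before the guessed block, so hitting the target is \emph{equivalent} to the guessed block sitting exactly where $u_{\alpha_\beta}$ should sit; conversely, for a genuine solution every detector's length is provably off target. Note also a structural difference: the paper never compares the $u$-track with the $v$-track directly, which sidesteps the two-counter synchronization you describe; your direct comparison might be salvageable with an analogous asymmetric weighting and an adjusted target, but as written both the construction and the two directions of its verification are missing.
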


\begin{proof}
Assume that $(u_1,v_1),\dots,(u_n,v_n)$ is a sequence of pairs of non-empty
words over
$\{a,b\}$. Define alphabet $\Sigma=\{a,b,i_1,\dots,i_n\}$, and
$k_\alpha=|u_\alpha|$ for all $\alpha=1,2,\dots,n$.

Next we define needed subsets of $\Sigma^+\times c^+$:
\begin{align*}
L_1&=\{(i_\alpha,c^{k_\alpha+1})\mid 1\le \alpha \le n\}^*,\\
L_2&=\bigcup_{\beta=1}^n\bigcup_{j=1}^{k_\beta} L_{\beta j}, \\
\intertext{where 
$L_{\beta j}=L_1\cdot (i_\beta,c^j)\{(i_\alpha,c)\mid 1\le \alpha\le n\}^*$,} 
L_3&=L_2\{(a,c),(b,c)\}^*,\\ 
L_4&=L_1\{(a,c),(b,c)\}^*
\{(a,c^2),(b,c^2)\}^+.
\end{align*}
Finally, for $\beta\in\{1,\dots,n\}$, let
\begin{align*}
S_\beta&=\{\mu\mid \mu\in \{a,b\}^*,|\mu|=|u_\beta|, \mu\ne u_\beta\},\\
\intertext{and set}
L_5&=\bigcup_{\beta=1}^n\bigcup_{\mu \in S_\beta} M_{\beta \mu},\\
\intertext{where }
M_{\beta \mu}&=L_1(i_\beta,c)\{(i_\alpha,c)\mid 1\le \alpha\le n\}^*\{(a,c),
(b,c)\}^* (\mu,c^{2k_\beta})\{(a,c^2),b,c^2)\}^*.\\
\end{align*}
Now we define
$$
L_u=L_3\cup L_4\cup L_5.
$$
Similarly, let $L_v$ be defined for the second components of the pairs $(u_\alpha, v_\alpha)$ in the sequence.  Note that $L_u$ and $L_v$ are in $\rat(\Sigma^*\times c^*)$, since we can define nondeterministic
FT's to recognize $L_1$, $L_{\beta j}$'s, $M_{\beta \mu}$'s and
therefore also $L_2$, $L_3$, $L_4$ and $L_5$ are rational
relations.

Next define
$L_0=\{(i_\alpha,c)\mid 1\le \alpha\le n\}^+\{(a,c^2),(b,c^2)\}^+$.
It is easy to construct a FT, that recognizes $L_0$.

\vali
{\it Claim.} $L_0\subseteq L_u\cup L_v$ if and only if
there does not exist sequence of $\alpha_i$'s, such that $1\le \alpha_1,\dots,\alpha_s\le n$
and
$u_{\alpha_1}\dots u_{\alpha_s}=v_{\alpha_1}\dots v_{\alpha_s}$.

\vali
{\it Proof of the Claim.} Assume that there exists such sequence $\alpha_1,
\dots,\alpha_s$, that PCP has solution and let
$$
w=(x,y)=(i_{\alpha_1}\dots i_{\alpha_s}u_{\alpha_1}\dots
u_{\alpha_s},c^{s+2(k_{\alpha_1}+
\dots+k_{\alpha_s})})\in L_0.
$$
(i) If $w\in L_3$, then for some
$
w_1=(i_{\alpha_1}\dots i_{alpha_s},c^m)\in L_2$,
$$
w=w_1(u_{\alpha_1}\dots u_{\alpha_s},c^{k_{\alpha_1}+
\dots+k_{\alpha_s}}).
$$
Therefore $w_1\in L_{\beta j}$, for some $\beta\in
\{\alpha_1,\dots,\alpha_s\}$ and $1\le j\le k_{\beta}$, and so in
path recognizing $w_1$, $i_{beta}$ has outputs $c^{j}$ and
$j<k_\beta+1$, so $m<k_{\alpha_1}+\dots+k_{\alpha_s}+s$. Therefore
$w\notin L_3$.

(ii) Let $\beta_i$'s, $i\in \{1,r\}$ be a sequence such that
$1\le \beta_1,\dots ,\beta_r\le n$, and let
$$
w_1=(i_{\alpha_1}\dots i_{\alpha_s}u_{\beta_1}\dots
u_{\beta_r},c^m)\in L_4.
$$
In the recognizing paths of $w_1$, for each $i_{\alpha_j}$ the
output is $k_{\alpha_j}+1$ and for $u_j$,
$j\in\{\beta_1,\dots,\beta_r\}$, the output is $c^{\ell_j}$, where
$\ell_j\ge k_j$ and at least for one $j$ $\ell_j >k_j$, because of
$\{(a,c^2),(b,c^2)\}^+$. So we have that
$m>k_{\alpha_1}+\dots+k_{\alpha_s}+s+k_{\beta_1}+\dots+k_{\beta_r}$,
and therefore $w\notin L_4$.

(iii) Assume that $w\in L_5$. Then there exists integer $\beta$,
$1\le \beta \le s$, and $\gamma_1,\mu,\gamma_2\in \{a,b\}^*$ such that
$w\in M_{\alpha_\beta \mu}$, $u_{\alpha_1}\dots u_{\alpha_s}=\gamma_1\mu\gamma_2$,
$|\mu|=u_{\alpha_\beta}$ and $\mu\ne u_{\alpha_\beta}$. If
$$
(i_{\alpha_1}\dots i_{\alpha_\beta}\dots
i_{\alpha_s}\gamma_1\mu\gamma_2,c^m)\in M_{\alpha_\beta \mu},
$$
then
\begin{align*}
m&=(k_{\alpha_1}+1)+\dots
+(k_{\alpha_{\beta-1}}+1)+(s-\beta+1)+|\gamma_1|+2|\mu|+
2|\gamma_2|\\ &=k_{\alpha_1}+\dots+k_{\alpha_{\beta
-1}}+s+|\gamma_1|+2|\mu|+2|\gamma_2|.
\end{align*}
Now since $w\in L_5$ and $|\gamma_1\mu\gamma_2|=k_{\alpha_1}+\dots+
k_{\alpha_s}$, we get that
$$
|\mu|+|\gamma_2|=k_{\alpha_\beta}+\dots+k_{\alpha_s},
$$
and since $|\mu|=k_{\alpha_\beta}$, finally
$$
|\gamma_2|=k_{\alpha_{\beta+1}}+\dots+k_{\alpha_s}\text{ and }
|\gamma_1|=k_{\alpha_1}+\dots+k_{\alpha_{\beta-1}}.
$$
It follows that $\mu=u_{\alpha_\beta}$ and we have a contradiction.
Therefore $w\notin L_5$.

So $w\notin L_u$ and by similarly it can shown that $w\notin L_v$,
and we have proved one direction of the claim.

\vali
Assume now that there is no sequence $1\le
\alpha_1,\dots,\alpha_s\le n$ such that the instance of PCP has
solution. Let $w_1\in \{a,b\}^+$ and $w=(i_{\alpha_1}\dots
i_{\alpha_s}w_1,c^{s+2|w_1|})\in L_0$.

By assumption, $w_1\ne u_{\alpha_1}\dots u_{\alpha_s}$ or
$w_1\ne v_{\alpha_1}\dots v_{\alpha_s}$.
We shall
show that if $w_1\ne u_{\alpha_1}\dots u_{\alpha_s}$, then
$w\in L_u$. Of course then similarly, if
$w_1\ne v_{\alpha_1}\dots v_{\alpha_s}$, then $w\in L_v$.

(i) If $|w_1|>|u_{\alpha_1}\dots u_{\alpha_s}|$, i.e.
$|w_1|>k_{\alpha_1}+\dots+k_{\alpha_s}$, then
for some
$x,y\in \{a,b\}^+$, $|x|=k_{\alpha_1}+\dots+k_{\alpha_s}$, $w_1=xy$ and
$$
w=(i_{\alpha_1}\dots
i_{\alpha_s},c^{k_{\alpha_1}+\dots+k_{\alpha_s}+s})
(x,c^{k_{\alpha_1}+\dots+k_{\alpha_s}})(y,c^{2|y|})\in L_4.
$$

(ii) If $|w_1|<|u_{\alpha_1}\dots u_{\alpha_s}|$, i.e.
$|w_1|<k_{\alpha_1}+\dots+k_{\alpha_s}$, then there exists
$\beta\in\{1,\dots,s\}$ and $ j\in \{1,\dots,k_{\alpha_\beta}\}$
such that
$$
|w_1|=k_{\alpha_1}+\dots+k_{\alpha_{\beta-1}}+j-1,
$$
and so
\begin{align*}
w=&(i_{\alpha_1}\dots
i_{\alpha_{\beta-1}},c^{(k_{\alpha_1}+1)+\dots+(k_{\alpha_{\beta-1}}+1)})
(i_{\alpha_\beta},c^j)\\ &\cdot(i_{\alpha_{\beta+1}}\dots
i_{\alpha_s},c^{s-\beta})(w_1,c^{k_{\alpha_1}+\dots+k_{\alpha_{\beta-1}}+j-1})\in
L_3.
\end{align*}

(iii) If $|w_1|= k_{\alpha_1}+\dots+k_{\alpha_s}$, then since
$w_1\ne u_{\alpha_1}\dots u_{\alpha_s}$, there exists
$\beta\in\{1,\dots,s\}$ and $\mu,\gamma\in \{a,b\}^*$ such that
$$
w_1=u_{\alpha_1}\dots u_{\alpha_{\beta-1}}\mu\gamma \text{ and }
|\mu|=|u_{\alpha_\beta}| \text{ but } \mu \ne u_{\alpha_\beta},
$$
and so
\begin{align*}
w=&(i_{\alpha_1}\dots
i_{\alpha_{\beta-1}},c^{(k_{\alpha_1}+1)+\dots+(k_{\alpha_{\beta-1}}+1)})
(i_{\alpha_\beta},c)\\ &\cdot(i_{\alpha_{\beta+1}}\dots
i_{\alpha_s},c^{s-\beta}) (u_{\alpha_1}\dots
u_{\alpha_{\beta-1}},c^{k_{\alpha_1}+\dots+k_{\alpha_{\beta-1}}})
(\mu,c^{2|u_{\alpha_\beta}|})(\gamma,c^{2|\gamma|})\in L_5.
\end{align*}

So $w\in L_u$, if $w_1\ne u_{\alpha_1}\dots u_{\alpha_s}$ and so
the claim is proved.

\vali
Now by the undecidability of PCP, it is undecidable whether
$L_0\subseteq L_u\cup L_v$ and whether $L_0\cup L_u\cup L_v =
L_u\cup L_v$. This proves the theorem.
\end{proof}

\begin{cor}\label{cor:lis}
It is undecidable for two rational relations $A$ and $B$ from $\rat
(\bin^*\times c^*)$, whether
\begin{align*}
1)\quad A&\subseteq B,\\ 2) \quad A&=B.
\end{align*}
\end{cor}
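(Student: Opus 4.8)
The plan is to reduce Theorem~\ref{lause} to Corollary~\ref{cor:lis} by encoding the larger input alphabet $\Sigma=\{a,b,i_1,\dots,i_n\}$ into the binary alphabet $\{0,1\}$ via a suitable coding morphism, and then transporting the rational relations $A,B\subseteq\Sigma^*\times c^*$ through this morphism while keeping the output alphabet unary. Concretely, let $m$ be chosen so that $2+n\le 2^m$ and fix an injective block code $h\colon\Sigma\to\{0,1\}^m$ assigning to each letter of $\Sigma$ a distinct word of length exactly $m$; extend $h$ to a morphism $h\colon\Sigma^*\to\{0,1\}^*$. This $h$ is injective on $\Sigma^*$ because the code is a uniform (hence prefix) code.

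Next I would define, for a rational relation $R\subseteq\Sigma^*\times c^*$ recognized by a finite transducer $T=(Q,\Sigma,\{c\},E,q_0,F)$, a new transducer $\widehat T$ over input alphabet $\{0,1\}$ obtained by replacing every transition $(q,x,c^k,q')\in E$ with a ``gadget'': a chain of fresh intermediate states reading the word $h(x)\in\{0,1\}^*$ letter by letter (say the whole of $c^k$ is output on the first edge of the chain, the rest output $\epsilon$), with the chain starting at $q$ and ending at $q'$. Since $E$ is finite and each $h(x)$ is a fixed finite word, $\widehat T$ is a finite transducer, and by construction $O(\widehat T)=(h\times\mathrm{id})(O(T))=\{(h(w),y)\mid (w,y)\in O(T)\}$. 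Applying this to the transducers for $A$ and $B$ yields $\widehat A,\widehat B\in\rat(\{0,1\}^*\times c^*)$ with $O(\widehat A)=(h\times\mathrm{id})(A)$ and $O(\widehat B)=(h\times\mathrm{id})(B)$.

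The final step is to observe that the passage $R\mapsto(h\times\mathrm{id})(R)$ preserves and reflects both inclusion and equality. Indeed, since $h$ is injective, $(h\times\mathrm{id})$ is an injective map on $\Sigma^*\times c^*$, so for any sets $A,B$ we have $A\subseteq B\iff(h\times\mathrm{id})(A)\subseteq(h\times\mathrm{id})(B)$ and $A=B\iff(h\times\mathrm{id})(A)=(h\times\mathrm{id})(B)$. Hence an algorithm deciding inclusion (resp.\ equality) for rational relations in $\rat(\{0,1\}^*\times c^*)$ would, when run on $\widehat A$ and $\widehat B$, decide inclusion (resp.\ equality) for the arbitrary instances $A,B\in\rat(\Sigma^*\times c^*)$ produced by the construction in Theorem~\ref{lause}; by that theorem no such algorithm exists. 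Therefore both problems are undecidable already over $\bin^*\times c^*$.

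The construction is entirely routine; the only point needing a little care is the bookkeeping in the transducer gadget — making sure the intermediate states are genuinely fresh (disjoint across distinct transitions) and that the output $c^k$ is emitted exactly once along each chain — but this is a standard substitution-into-a-transducer argument and presents no real obstacle. One should also note $h$ is chosen once and for all depending only on $n$ (equivalently on $\Sigma$), so the reduction is effective.
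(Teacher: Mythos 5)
Your proposal is correct and follows essentially the same route as the paper, which simply notes that one can encode the alphabet $\Sigma$ into $\bin^*$ and the result remains; you merely fill in the details the paper leaves to the reader (a uniform injective block code $h$, the transducer construction realizing $(h\times\mathrm{id})(R)$ as a rational relation, and the observation that injectivity of $h$ makes the image map preserve and reflect inclusion and equality). No issues.
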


\begin{proof}
Claim follows straight forwardly from Theorem \ref{lause}, since we
can encode the alphabet $\Sigma$ into $\bin^*$ and the result
remains.
\end{proof}

We shall next define a special type of finite transducer, so called
{\it $Z$-transducer}. FT $T$ is called $Z$-transducer, if it is of the
form
$$
(Q,\{0,1\},\{c, cc\},E,q_0,g_f),
$$
i.e. it has input alphabet $\{0,1\}$,
output alphabet $\{c\}$, only one final state $q_f$ and the set of
transitions $E\subseteq Q\backslash \{q_f\}\times \{0,1\}\times\{c,cc\}\times
Q$. We shall define $Z$-transducer as quadruple $(Q,E,q_0,g_f)$ from now on,
since input and output alphabets are fixed.
Notice that $Z$-transducer reads one symbol at a time and always outputs
one or two $c$'s. Notice also that there is no transitions from the final
state $q_f$ in $Z$-transducer.

A $Z$-transducer is called {\it deterministic} if the underlying
automaton is deterministic, i.e. if for any $a\in \{0,1\}$, $q\in
Q\backslash \{q_f\}$ there exists a unique transition $(q,a,b,p)$,
where $b\in \{c,cc\}$ and $p\in Q$. A $Z$-transducer is called {\it
complete}, if for any $a\in\{0,1\}$, $q\in Q\backslash\{q_f\}$
there exists at least one transition of the form $(q,a,b,p)$. Note
that here determinism preserves completeness. Note also that every
$Z$-transducer can be maid complete by adding a {\it garbage state}
$f$ into $Q$ such that if there does not exists any transition
$(q,a,b,p)$ for some $q$ and $a$, then we add transition
$(q,a,c,f)$ to $E$ and further we add transition $(f,a,c,f)$ to $E$
for $a\in\bin$.

Let $T$ be a $Z$-transducer. As for FT's, we define the set
\begin{align*}
O(T)=&\{(w,y)\mid w=a_0\dots a_n, \quad y=b_0\dots b_n, \quad n\in\N, \quad
a_i\in\{0,1\},\\
&b_i\in \{c,cc\},\quad 0\le i\le n,
\text{ and there exists states }q_i\in Q,\text{ such that }\\
&(q_i,a_i,b_i,q_{i+1})\in E \text{ and }q_{n+1}=q_f\}
\end{align*}

Note that
in deterministic $Z$-transducer $T$, for all $w\in\bin^*$,
there exists either a unique path when reading word $w$ or a prefix
$u$ of $w$ such that $u\in L(T)$. Since there is no transitions from
final state, we see that if $w=uv$, $v$ is a nonempty word, then
$u\in L(T)$ implies $w\notin L(T)$.

\begin{cor}\label{lemma:1}
Let $C$ and $D$ be two $Z$-transducers, $C$ is deterministic and
$D$ nondeterministic and complete. It is undecidable, whether
$O(C)\subseteq O(D)$.
\end{cor}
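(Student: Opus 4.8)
The plan is to reduce from Corollary~\ref{cor:lis}, i.e., from the undecidability of inclusion $A\subseteq B$ for rational relations $A,B\in\rat(\bin^*\times c^*)$. The key structural observation is that a $Z$-transducer is nothing but a finite transducer whose input alphabet is $\bin$ and whose every transition reads exactly one input letter and outputs $c$ or $cc$; hence $O(C)$ and $O(D)$ are rational relations of a very constrained shape. So the task is: given arbitrary rational relations $A,B\subseteq\bin^*\times c^*$, manufacture a deterministic $Z$-transducer $C$ and a nondeterministic complete $Z$-transducer $D$ such that $O(C)\subseteq O(D)$ is equivalent to $A\subseteq B$.

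First I would normalize the given transducers. Any rational relation in $\rat(\bin^*\times c^*)$ is recognized by an FT in which every transition reads a single input letter (split longer input labels through intermediate states, padding the output onto the last piece) and where $\eps$-input transitions can be removed by the standard $\eps$-closure argument since the output alphabet is unary. However, a transition reading one letter may still output an arbitrarily long block $c^m$ with $m=0$ or $m\ge 2$; this is the obstruction to being literally a $Z$-transducer, whose outputs must lie in $\{c,cc\}$. The fix is to \emph{rescale the output}: replace every output $c^m$ by $c^{m+1}$, equivalently add one $c$ of output per input letter read. Under the map $(w,c^k)\mapsto(w,c^{k+|w|})$ applied uniformly to both $A$ and $B$ this rescaling is an injection on pairs and preserves the inclusion relation, and it turns outputs $c^m$ ($m\ge0$) into $c^{m+1}$ ($m\ge1$). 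To keep outputs in $\{c,cc\}$ one still has to break a block $c^{m+1}$ with $m+1\ge 3$ into several steps, which forces reading several input letters per step; so a cleaner route is to first apply a bound on block lengths — split each transition reading one letter and outputting $c^m$ into a short gadget reading that letter and then reading, say, $\eps$ — but $\eps$ is disallowed too.

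So the genuinely careful step, and the one I expect to be the main obstacle, is getting \emph{simultaneously}: (a) one input letter read per transition, (b) output in $\{c,cc\}$ per transition, (c) $C$ deterministic, and (d) $D$ complete, while keeping the reduction faithful. The standard device is to pad the input alphabet: represent each original input letter $a\in\bin$ by a fixed-length block (here we cannot change the alphabet, but we can demand the transducers only act on inputs of a padded form $a\,0^{N}$ or similar) so that a long output block $c^m$ can be emitted one or two $c$'s at a time over the $N$ auxiliary letters, and inputs not of the padded form are simply not accepted by $C$ (determinism lets $C$ die on a garbage state) — but $D$ must be complete, so $D$ sends every unexpected letter to an accepting-free garbage sink, which is exactly the completion construction described in the text for $Z$-transducers. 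Determinism of $C$ comes for free if the original $A$ was recognized by a DFA-like transducer with single-letter inputs; if not, one first determinizes the \emph{input} automaton of $A$ (possible since acceptance of $L(C)$ only, not of the relation, needs determinism, and the relation $O(C)$ attached to a deterministic underlying automaton is still whatever the transitions say) — here one must be slightly careful that determinizing the underlying input automaton does not merge transitions with different outputs, which is handled by refining states by the output block read so far.

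Finally I would verify the equivalence. After the construction, a pair $(w,y)\in O(C)$ corresponds bijectively, via the fixed padding/rescaling decoding $\phi$, to a pair in $A$, and similarly for $D$ and $B$; since $\phi$ is injective and the padding is a syntactic restriction both transducers respect, $O(C)\subseteq O(D)$ holds iff $A\subseteq B$ holds. Combined with Corollary~\ref{cor:lis} this yields undecidability of $O(C)\subseteq O(D)$ for $C$ deterministic and $D$ nondeterministic complete, proving the corollary. The only place where real attention is needed is confirming that the completion of $D$ does not accidentally enlarge $O(D)$ on padded inputs (the garbage state has no path to $q_f$, so it does not) and that $C$'s determinism plus the ``no transitions from $q_f$'' property is preserved by all the gadget surgeries.
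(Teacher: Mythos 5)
Your reduction starts from Corollary~\ref{cor:lis} with \emph{arbitrary} rational relations $A,B\in\rat(\bin^*\times c^*)$, and this is where the argument breaks. A deterministic $Z$-transducer is single-valued: for every input word there is at most one accepting path, hence at most one output. An arbitrary rational relation $A$ may be multi-valued (e.g.\ $\{(0,c^n)\mid n\ge 1\}$ is rational), and if your encoding of pairs sends the input component to a function of the original input word (as any padding $w\mapsto w\,0^N$-style coding does), then two pairs of $A$ with the same input are mapped to two pairs with the same input, which no deterministic $Z$-transducer can both accept. Determinizing ``the underlying input automaton'' and ``refining states by the output block read so far'' cannot repair this, because the multi-valuedness lives in the relation itself, not in the presentation of the automaton; moreover $\eps$-input loops with nonempty output make the output length unbounded in the input length, so no fixed padding $N$ even restores the length constraint $|w|\le|y|\le 2|w|$ that every $Z$-transducer forces. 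So the faithfulness claim ``$O(C)\subseteq O(D)$ iff $A\subseteq B$'' cannot be achieved along this route for general $A$.

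The paper avoids this by not reducing from the abstract corollary at all: it goes back to the concrete relations $L_0$, $L_u$, $L_v$ built from the PCP instance in Theorem~\ref{lause}. There $L_0$ is letter-to-letter with a \emph{unique} output per letter ($c$ on the index letters, $c^2$ on $a,b$), so after the block coding $\delta$ (each letter becomes a word $10^i1$ of length larger than any $k_\alpha+1$) and the rescaling $\chi(v,c^m)=(\delta(v)0,\,c^{m+|\delta(v)0|})$, the output can be doled out one or two $c$'s per binary input symbol, giving a \emph{deterministic} $Z$-transducer for $\chi(L_0)$, while $\chi(L_u\cup L_v)$ only needs a nondeterministic complete one; inclusion is preserved because $\chi$ is injective on pairs. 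Your padding-plus-rescaling instinct is exactly the paper's $\chi$, but it only works because the source relations are hand-picked so that the deterministic side is single-valued and all outputs are linearly tied to input length; if you want to salvage your write-up, replace the generic $A,B$ by these specific relations (or reduce directly from PCP), and the rest of your construction goes through.
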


\begin{proof}
In the proof of Corollary \ref{cor:lis} we mentioned the coding of
the alphabet $\Sigma$ in Theorem \ref{lause} to binary alphabet.
Let $(u_1,v_1),\dots,(u_n,v_n)$ be the instance of PCP used in the
proof of Theorem \ref{lause}. We can for example use coding
$\delta$, where $k=1+\max_{1\le i\le n}\{|u_i|,|v_i|\}$ and
alphabet $\Sigma=\{a,b,i_1,\dots,i_n\}$ is encoded to set
$\{10^{i}1 \mid k\le i \le k+n+1\}$.

If we now code with $\chi$ each element $w=(v,c^m)\in\Sigma^+
\times c^+$ used in the proof of Theorem \ref{lause} in such a way
that $\chi(w)=(\delta(v)0,c^{m+|\delta(v)0|})$. Denote by
$\chi(L_i)$ the coded set $L_i$, $i=1,2,3,4,5,u,v,0$.

Clearly $\chi(L_1)$ can be reorganized by a non-deterministic
$Z$-transducer, when reading $\delta(i_\alpha)$ the transducer
outputs $cc$ for $k_\alpha+1$ first input symbols and $c$ for the
others. When reading the last 0 in the input, $Z$-transducer
outputs one $c$ and moves to final state.

Using the same idea, also other $\chi(L_i)$'s can be recognized by
a non-deterministic $Z$-transducer. Actually
$$
\chi(L_0)=\{(\delta(i_\alpha),c^{|\delta(i_\alpha)|+1})\mid 1\le \alpha\le n\}^+
\{(\delta(a),c^{|\delta(a)|+2}),(\delta(b),c^{|\delta(b)|+2})\}^+(0,c)
$$
can be reorganized by a deterministic $Z$-transducer.

Now since $\chi(L_0)\subseteq \chi(L_u)\cup \chi(L_v)$ if and only
if $L_0\subseteq L_u\cup L_v$, the claim follows by the proof of
Theorem \ref{lause}.
\end{proof}

We shall use result in above corollary in the next section.

\section{Defense Systems}\label{defe}

In this section we shall consider so called {\it defense systems}, DS for short.
Result in this section is from \cite{Lis:91}.
A DS system is intended to defense some elements of the set integers $\Z$. The
elements of $\Z$ are also called defense {\it nodes}.
Any DS is a triple $V=(K,H,\Gamma)$, where $K$ is set of {\it lines},
$$
K=\{i\mid 1\le i \le s,\quad i,s\in \Z\},
$$
$H$ is the set of instructions and $\Gamma$ is the set of attacking symbols.

Each node can be defended by lines from $K$. In other words, each node
can be defended by $s$ different lines. The initial situation in our case
is that
only node 0 is defended by line 1,
and the other nodes don't have defence at all.

The attacking system is supposed to `send' symbols from the set $\Gamma$ to
the defending system. This means that attacks can be thought as a words from
$\Gamma^*$.

Each rule of the set $H$ is of the form $(k,a,j,z,p)$, where $1\le k,j\le s$,
$a\in \Gamma$, $z\in \{-1,0,1\}$ and $p$ is the real number $0\le p\le 1$.
Each rule means that when attacking symbol $a$ is send, defense of node $i$
by line $k$ is transferred with
probability $p$ to defense of node $i+z$ by line $j$. We shall denote the
probability above also $p_{a,k,j}^z$. Naturally for all $a\in \Gamma$
$$
\sum_{j=1}^s\sum_{z=-1}^1 p_{a,k,j}^z=1,
$$
i.e. on each attacking symbol something necessarily happens.
Note that the underlying system in defense systems is nondeterministic and
therefore the model of defense systems we defined is sometimes called
{\it nondeterministic DS}, NDS for short.

\begin{figure}[ht]
\begin{center}
\unitlength1mm
\begin{picture}(100,50)
\multiput(30,40)(0,-5){3}{\line(1,0){60}}
\put(30,15){\line(1,0){60}}
\multiput(26,40)(0,-5){3}{\line(1,0){2}}
\multiput(23,40)(0,-5){3}{\line(1,0){1}}
\multiput(21,40)(0,-5){3}{\line(1,0){.5}}
\multiput(92,40)(0,-5){3}{\line(1,0){2}}
\multiput(96,40)(0,-5){3}{\line(1,0){1}}
\multiput(98.5,40)(0,-5){3}{\line(1,0){.5}}

\put(26,15){\line(1,0){2}}
\put(23,15){\line(1,0){1}}
\put(21,15){\line(1,0){.5}}
\put(92,15){\line(1,0){2}}
\put(96,15){\line(1,0){1}}
\put(98.5,15){\line(1,0){.5}}

\multiput(40,40)(10,0){5}{\makebox(0,0)[c]{$\bullet$}}
\multiput(40,35)(10,0){5}{\makebox(0,0)[c]{$\bullet$}}
\multiput(40,30)(10,0){5}{\makebox(0,0)[c]{$\bullet$}}
\multiput(40,15)(10,0){5}{\makebox(0,0)[c]{$\bullet$}}
\put(18,40){\makebox(0,0)[c]{1}}
\put(18,35){\makebox(0,0)[c]{2}}
\put(18,30){\makebox(0,0)[c]{3}}
\put(18,15){\makebox(0,0)[c]{$s$}}
\put(40,45){\makebox(0,0)[c]{$-2$}}
\put(50,45){\makebox(0,0)[c]{$-1$}}
\put(60,45){\makebox(0,0)[c]{$0$}}
\put(70,45){\makebox(0,0)[c]{$1$}}
\put(80,45){\makebox(0,0)[c]{$2$}}
\put(60,40){\circle{3}}
\put(60,26){\makebox(0,0)[c]{$\cdot$}}
\put(60,23){\makebox(0,0)[c]{$\cdot$}}
\put(60,20){\makebox(0,0)[c]{$\cdot$}}

\end{picture}
\caption{\label{inids} A picture illustrating a defense system in the initial configuration
defending the node 0 by the line 1.}
\end{center}
\end{figure}
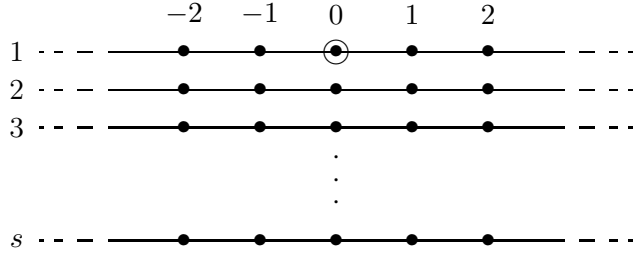

We fix the attacking symbol set $\Gamma=\bin$ in this paper.

\vali
A NDS can also be viewed as a countable Markov system. To simplify notations
we denote each configuration of a NDS by an integer. If node $i$ is defended
by line $j$, we denote this configuration by integers $i\cdot s+(j-1)$.
Recall that the initial configuration is that node 0 is defended by line
1, which is represented as an integer 0.

Let $w\in\bin^*$. We shall denote the probability that the NDS is in the
configuration $k\in \Z$ in response to a finite sequence of attacking signals
$w$ by $p_w(k)$.

\vali
Let $D=(K,H,\Gamma)$ be a defense system. $D$ is called {\it unreliable} if,
for some $w\in\Gamma^*$,
after attacking sequence $w$ the probability that node 0 is defended
by some line is 0, i.e. $p_w(j)=0$ for all $0\le j \le s-1$. The word $w$ here
is called {\it critical}. If there is no
critical words $w\in \Gamma^* $ that $D$,
then $D$ is called {\it reliable}.

\begin{thm}\cite{Lis:91}\label{thm:1}
The unreliability of NDS is undecidable, i.e. it is undecidable for a given
NDS $B=(K,H,\bin)$, to determine whether there exist $w\in\bin^*$ such that
$p_w(j)=0$ for all $0\le j\le s-1$.
\end{thm}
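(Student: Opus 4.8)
The plan is to reduce the problem of Corollary~\ref{lemma:1}, the inclusion $O(C)\subseteq O(D)$ for a deterministic $Z$-transducer $C$ and a nondeterministic complete $Z$-transducer $D$, to the unreliability of a suitable NDS. The key observation linking the two models is that a $Z$-transducer, reading a word over $\bin$ and outputting a block of one or two $c$'s per input letter, records in its output length a running count; this count can be mimicked by the position of the defended node in a defense system, where each attacking symbol shifts the defending line by $z\in\{-1,0,1\}$. So I would encode the \emph{difference} of the output lengths of $C$ and $D$ into the node index, and encode the state pair of $C$ and $D$ into the line index.

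First I would construct, from $C=(Q_C,E_C,q_0^C,q_f^C)$ and $D=(Q_D,E_D,q_0^D,q_f^D)$, an NDS $B=(K,H,\bin)$ whose lines $K=\{1,\dots,s\}$ are indexed (after a shift so that the initial configuration is node $0$ on line $1$) by pairs $(q,q')\in Q_C\times Q_D$ together with a few auxiliary bookkeeping lines. The intended invariant, after reading an attack word $w\in\bin^*$, is: the defense configuration places node $d$ on a line coding $(q,q')$ precisely when $C$ on input $w$ is in state $q$ having output $c^{m_C}$, $D$ on input $w$ has a run ending in state $q'$ with output $c^{m_D}$, and $d=m_C-m_D$ (again up to a fixed affine normalization bringing the start to $0$). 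For each transition $(q,a,b_C,r)\in E_C$ (unique, since $C$ is deterministic) and each transition $(q',a,b_D,r')\in E_D$, I put a rule $(k,a,j,z,p)$ in $H$ with $k$ the line of $(q,q')$, $j$ the line of $(r,r')$, and $z=|b_C|-|b_D|\in\{-2,-1,0,1\}$ — here is a small technical wrinkle: $z$ must lie in $\{-1,0,1\}$, so I would first preprocess $C$ and $D$ (e.g. by splitting each "output $cc$'' transition into two steps through an auxiliary state, padding the input alphabet, or by choosing the normalization so that the per-step differences are bounded by $1$) to force $|b_C|-|b_D|\in\{-1,0,1\}$; this is a routine transducer surgery. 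The probabilities $p$ on rules leaving a fixed line on a fixed symbol $a$ are spread uniformly over the (finitely many) available $D$-transitions, so that $\sum_{j,z}p^z_{a,k,j}=1$ holds; the uniform weighting is harmless because unreliability only cares whether the total mass on "node $0$'' is zero, not its exact value.

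Next I would handle the acceptance condition. Recall $C$ is deterministic with no transitions out of $q_f^C$, so an input $w$ is in $L(C)$ iff the unique run reaches $q_f^C$ exactly at the end; and $(w,c^{m})\in O(C)\setminus O(D)$ should correspond to a configuration where the node has "escaped'' every line still consistent with a completed $D$-run of the same output length. Concretely, I would add an "accepting sink'' mechanism: when the $C$-component reaches $q_f^C$, route (with the remaining probability, or via an extra terminal attack symbol if the fixed binary alphabet forces it — but since $\Gamma=\bin$ is fixed I would instead bake the acceptance test into the last-letter transitions, exactly as in the proof of Corollary~\ref{lemma:1} where the final $0$ plays this role) all surviving mass either onto node $0$ (if the corresponding $D$-run is also accepting with matching length, i.e. $d=0$ and $q'=q_f^D$) or onto node $1$ lines that thereafter stay away from node $0$ forever. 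Then node $0$ retains positive probability after reading $w$ iff $w\in L(C)$ and some accepting run of $D$ on $w$ has the same output length as $C$'s — which, because both output $c^{|w|\cdot\text{const}}$ controlled blocks and we tracked the difference exactly, is equivalent to $(w,y_C(w))\in O(D)$, i.e.\ to $(w,y_C(w))\in O(C)\cap O(D)$. Hence $B$ is reliable iff for every $w\in L(C)$ the pair $(w,y_C(w))$ lies in $O(D)$, i.e.\ iff $O(C)\subseteq O(D)$; equivalently $B$ is unreliable iff $O(C)\not\subseteq O(D)$. Since the latter is undecidable by Corollary~\ref{lemma:1}, unreliability of NDS is undecidable.

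The main obstacle I anticipate is not the gross structure of the reduction but the careful management of the invariant at the \emph{boundary} steps — guaranteeing that (a) the $z\in\{-1,0,1\}$ constraint is met after preprocessing without changing the accepted relations, (b) the probabilities on each $(k,a)$ sum to exactly $1$ while still reserving the correct behaviour for the acceptance sink, and (c) once the $C$-run has left its final state (or a garbage computation begins), the configuration is permanently diverted away from node $0$, so that a "yes'' for unreliability genuinely witnesses a word in $L(C)$ with $(w,y_C(w))\notin O(D)$ and not some spurious partial run. All three are bookkeeping points that the construction in \cite{Lis:91} addresses by a judicious choice of lines and garbage states, and I would follow that template; I would also double-check that $p_w(0)=0$ for the critical word is equivalent to $p_w(j)=0$ for \emph{all} lines $j$ defending node $0$, which holds because in my encoding node $0$ is occupied by at most one family of lines tracking the matched computation.
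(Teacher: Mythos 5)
Your overall architecture matches the paper's reduction: lines indexed by pairs of states, the defended node tracking the difference of output lengths, uniform probabilities, and undecidability pulled back from Corollary~\ref{lemma:1}. (Two small points first: since a $Z$-transducer outputs only $c$ or $cc$ per letter, $|b_C|-|b_D|$ already lies in $\{-1,0,1\}$, so your preprocessing is unnecessary; and "baking the acceptance test into the last-letter transitions'' is not well-defined for an NDS, which has no endmarker --- the paper instead lets the system keep evolving after acceptance, using rules of type \eqref{eq:1} and \eqref{eq:2} and appending further attack symbols to the witness word.)

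The genuine gap is in the direction "unreliable $\Rightarrow O(C)\not\subseteq O(D)$''. In your construction the lines are pairs from $Q_C\times Q_D$ only, so nothing guarantees that node $0$ stays defended while $C$'s (unique, deterministic) computation is still running. Concretely: let $C$ always output $c$ and never reach its final state, and let $D$ always output $cc$. Then $O(C)=\emptyset\subseteq O(D)$, yet after the single attack symbol $0$ every line of your NDS sits at a nonzero node, so the word $0$ is critical and your NDS is unreliable --- the reduction gives the wrong answer. Your point (c) addresses the opposite concern (diverting mass \emph{away} from node $0$ after $C$ finishes); what is missing is a mechanism that \emph{keeps} node $0$ defended with positive probability as long as no prefix of the attack word lies in $L(C)$, so that a critical word is forced to contain a word of $L(C)$ whose image is missed by $D$. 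The paper achieves exactly this by replacing $D$ with $D'$, obtained by grafting a disjoint copy of $C$'s transition structure onto $D$'s initial state (so $O(D')=O(D)$ but $D'$ contains $C$'s paths as non-accepting runs): the diagonal line $(q,q)$ then defends node $0$ for every word with no prefix in $L(C)$, and once $C$ accepts, that copy is stranded and rule \eqref{eq:2} pushes it off node $0$. Without this (or an equivalent) augmentation, the equivalence you claim at the end of your argument does not hold, so the step "$B$ is unreliable iff $O(C)\not\subseteq O(D)$'' is unjustified as written.
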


\begin{proof}
In this proof we shall use the undecidability result of Corollary
\ref{lemma:1}.

Let $C$ be a deterministic $Z$-transducer and $D$ be a nondeterministic and
complete $Z$-transducer, $C=(K_1,H_1,q_0,q_f)$ and $D=(K_2,H_2,g_0,g_f)$.
Define a nondeterministic complete $Z$-transducer $D'=(K_3,H_3,g_0,g_f)$,
where
\begin{align*}
K_3&=K_1\cup K_2,\\
H_3&=H_1\cup H_2\cup \{(g_0,a,b,q)\mid (q_0,a,b,q)\in H_1)\}.
\end{align*}
$Z$-transducer $D'$ satisfies $O(D')=O(D)$, but the transducer also has
paths of $C$ in it, although they are not accepting paths.

Let $s$ be the number of elements of the set
$$
K=K_1\times K_3=\{(q,g)_j\mid 1\le j\le s\} \text{ and }(q,g)_1=(q_0,g_0).
$$
Let
$$
H\subseteq K\times\bin\times \{-1,0,1\}\times K
$$
so that $((q_k,g_\ell)_i,a,z,(q_r,g_t)_j)\in H$, if
$$
(q_k,a,b_1,q_r)\in H_1 \text{ and } (g_\ell,a,b_2,g_t)\in H_3,
$$
and $z$ follows by the rules $(b_1,b_2\in \{c,cc\})$
\begin{equation}\label{eq:z}
z=\begin{cases} -1 &\text{if }b_1=b_2c,\\
                0 &\text{if }b_1=b_2,\\
                1 &\text{if }b_2=b_1c.
\end{cases}
\end{equation}
Moreover $H$ contains elements
\begin{equation}\label{eq:1}
((q_f,g_f),a,0,(q_f,g_f)),
\end{equation}
\begin{equation}\label{eq:2}
((q,f),a,1,(q_f,q_f)), \text{ where }\{q,f\}\cap \{q_f,g_f\}\ne \emptyset,\quad
a=0,1.
\end{equation}.
We shall refer the elements of $H$  as rules.

We shall now associate a NDS B to construction above. Let
$$
M_{a,k}^z=\{j\mid ((q,g)_k,a,z,(q,g)_j)\in H\}
$$
and let $m(a,k,z)=\vert M_{a,k}^z\vert$ and
$$
m(a,k)=\sum_{z=-1}^1 m(a,k,z).
$$
Let $B=(K',H',\bin)$ be a defense system, such that $K'=\{1,\dots,s\}$,
if $((q,g)_k,a,z,(q,g)_j)\in H$, then $(k,a,z,j,p_{a,k,j}^z)\in H'$,
$p_{a,k,j}^z=1/m(a,k)$. This probability is obvious by the construction.

\pvali
{\it Claim.} The existence of finite sequence $w\in\bin^*$ such
that the NDS $B$ has $p_w(j)=0$ for all $0\le j\le s-1$ is equivalent to the
fact that $O(C)\not\subseteq O(D)$.

\vali
Before the proof, we note few facts about the construction.
Our defense system $B$ simulates the calculations of $Z$-transducers
$C$ and $D'$ at a same time in its lines, which can be thought as an elements
of $K=K_1\times K_3$.

By \eqref{eq:z}, $z$ gives the difference of lengths
of outputs in $C$ and $D'$. It follows that if the defended node is 0,
the outputs of $C$ and $D'$ are equal. If the node is negative, the length of
the output of $C$ is larger than the length of the output of $D'$
by the absolute value of the node. If it is positive,
then vice versa.

Now we are ready to proof the equivalence mentioned above.

\pvali
{\it Proof of the Claim.}
Assume first that $O(C)\not\subseteq O(D)$. This means that there
exists a word $w\in \bin^*$ such that for unique $y\in c^*$,
$(w,y)\in O(C)$, but $(w,y)\notin O(D)$. We have two cases:

\vali
i) If $w\in L(D)$, then for all $(w,y')\in O(D)$, $y'\ne y$. There exists
four kind of paths in our NDS $B$, that have positive probability on attacking
sequence $w$, we separate them in terms of calculations of $C$ and $D'$:

\vali
1) If the simulation of $D'$ is similar to simulation of $C$. Then
we are all the time defending the node 0 and end up in state $(q_f,q_f)$.
Now for a word $wa$, $a\in\bin$, we use rule \eqref{eq:2} and the defense
shifts to node 1, since $z=1$. Note that we can add several symbols to
$w$, and defense of node moves to one larger by every symbol. The simulation
of $C$ does not change from beginning, since no
subword of accepted word can be accepted in deterministic $Z$-transducer.

2) If the simulation of $D'$ reaches the final state $g_f$ before than the
simulation of $C$. After that the rule used is \eqref{eq:2}.
Every step of this rule moves the defense of the node to the node
one larger. After that we may add a symbols from $\bin$ to the end of
$w$ to get the defense to a positive node.

3) If simulation of $D'$ is not in the final state when the simulation
$C$ ends. Again after that we may add symbols of $\bin$ to the end
of the word $w$ to get the defense to a positive node.

4) If the simulations of $D'$ and $C$ reach the final state at the same time, i.e.
in the end of $w$. Of course the node defended at that time can't be
0, since then the outputs would be equal in $C$ and $D$, and that is
impossible, by the fact that $(w,y)\notin O(D)$.
We can again add symbols to end of $w$, and the rule used is \eqref{eq:1}
and that does not change the defense anywhere.

\vali
By cases 1-4, we see that, there exists a word $wv$, $v\in\bin^*$ such that
$p_{wv}=0$ for all $0\le j\le s-1$. This follows, since there is a limit for
symbols, that has to be added to get all these possible paths of defense
to positive nodes.

\vali
ii) If $w\notin O(D)$, then the 1-3 above cases are possible, and again
there exists $wv$, $v\in\bin^*$, such that NDS $B$ is unreliable.

So we have proved that if $O(C)\not\subseteq O(D)$, then NDS
$B$ is unreliable.

\pvali
Assume next that NDS $B$ is unreliable. It means that
there exists sequence $w\in\bin^*$ such that
$p_w(j)=0$ for all $0\le j\le s-1$. By the fact that $C$ is deterministic
and therefore complete, it means that some subword of $w$ must be in
$L(C)$, since otherwise there is a path in $C$ for a input word $w$
and therefore in $B$ node 0 has positive defense probability for
some line, which is related to element $(q,q)\in K$, $q\in K_1$.

Now assume that $v$ is the subword of $w$ such that $v\in L(C)$ and
let $y$ be the unique element of $\bin^*$, such that $(v,y)\in O(C)$.
Now $(v,y)\notin O(D)$, since otherwise there would be a possible defense
in node 0 after attacking sequence $v$ and after $v$ the instruction used
would be the corresponded to rule \eqref{eq:1} which does not move the
defense anywhere. Therefore for the attacking sequence $w$ there would be
a defense in the node 0 with positive probability,
which is not possible by the assumption.

Now we have finally proved the Claim.

\pvali
By Corollary \ref{lemma:1} it is undecidable whether
$O(C)\not\subseteq O(D)$ and therefore the unreliability of NDS is
also undecidable.
\end{proof}

Note that since unreliability is a complement of reliability, this also
means that reliability is undecidable.

\section{Finite substitutions}\label{sec:eq}

Let $\Sigma$ and $\Delta$ two alphabets. For a set $S$ denote by $2^{S}$ the power set
of $S$, i.e. the collection of all subsets of $S$.

A mapping $\phi:\Sigma^* \to 2^{\Delta^*}$ is called {\it substitution}, if

\vali
1) $\phi(\epsilon)=\{\epsilon\}$ and

2) $\phi(xy)=\phi(x)\phi(y)$.

\vali
Because of condition 2, a substitution is usually defined by giving
the images of all letters in $\Sigma$.

Let $\phi$ be as above and $L$ be a language over $\Sigma^*$, i.e. $L\subseteq \Sigma^*$. We denote
$$
\phi(L)=\bigcup_{w\in L} \phi(w).
$$
Two substitutions $\phi,\xi:\Sigma^*\to 2^{\Delta^*}$ are equivalent on language
$L$ if
$$
\phi(L)=\xi(L).
$$

A substitution $\phi$ is called $\epsilon${\it -free}, if $\epsilon\notin\phi(a)$
for all $a\in \Sigma$.
And it is called a {\it finite substitution} if, for all $a\in\Sigma$,
the set $\phi(a)$ is finite.

\pvali
A language $L$ is called {\it regular}, if it is accepted by a
finite automaton. It is known that regular languages are closed under finite
substitutions, which means that if $L$ is regular , so is $\phi(L)$ for finite
substitution $\phi$.

Next theorem states an undecidability result concerning finite substitutions and
regular languages. It is from \cite{Lis:97}

\begin{thm}
The equivalence problem for $\epsilon$-free finite substitutions on regular
language $b\{0,1\}^*c$ is undecidable.
\end{thm}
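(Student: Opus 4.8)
The plan is to reduce the unreliability problem for NDS (Theorem~\ref{thm:1}) to the equivalence problem for $\epsilon$-free finite substitutions on $b\{0,1\}^*c$. Given an NDS $B=(K,H,\bin)$ with $s$ lines, I would build two finite substitutions $\phi,\xi\colon\Sigma^*\to 2^{\Delta^*}$ with $\Sigma=\{b,0,1,c\}$ whose images on a word $bwc$ (with $w\in\bin^*$) encode, in a suitable alphabet $\Delta$, the set of all configurations reachable from the initial configuration $0$ under the attacking sequence $w$, together with a ``witness'' block that records whether node $0$ is still defended. The letter $b$ sets up the initial configuration; each letter of $w$ simulates one attacking symbol by nondeterministically choosing, for each current (line, node) pair, one of the finitely many rules of $H$ applicable to it and emitting the resulting (line, node) pair; the letter $c$ closes off the computation. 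Because $H$ is finite, each letter has only finitely many images, so $\phi$ and $\xi$ are genuinely finite substitutions, and by inserting a fixed nonempty marker into every image one makes them $\epsilon$-free.

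The core idea is the standard ``one substitution lists the reachable set, the other lists the same set minus the bad configurations'' trick, so that $\phi(bwc)=\xi(bwc)$ for every $w$ exactly when no $w$ is critical, i.e. exactly when $B$ is reliable. Concretely I would let $\phi$ produce, for each maximal run of $B$ on $w$, a string encoding the final configuration reached, and also (in parallel, via the nondeterminism of the image of $c$) a special ``sentinel'' string $\sigma$; and I would let $\xi$ do the same but allow the sentinel $\sigma$ to be emitted \emph{only} along runs that end with node $0$ defended (by gating the choice on the line/node component carried in the simulated state). Then $\sigma\in\xi(bwc)$ iff some run keeps node $0$ defended, i.e. iff $p_w(j)>0$ for some $0\le j\le s-1$, while $\sigma\in\phi(bwc)$ always; all other strings in the two image sets coincide by construction. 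Hence $\phi(bwc)=\xi(bwc)$ for all $w\in\bin^*$ iff $w$ is never critical, and summing over the regular language $b\bin^*c$ gives $\phi(b\bin^*c)=\xi(b\bin^*c)$ iff $B$ is reliable. Since reliability is undecidable by Theorem~\ref{thm:1}, equivalence of $\epsilon$-free finite substitutions on $b\{0,1\}^*c$ is undecidable.

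The two technical points that need care are: (i) making sure that, apart from the sentinel, $\phi$ and $\xi$ produce \emph{exactly} the same sets of strings on every $bwc$ — this forces the non-sentinel part of the image of $c$ to be identical for both substitutions and the simulating parts (images of $b$, $0$, $1$) to be literally equal, with the only asymmetry hidden in a guard that controls whether $\sigma$ may be appended; and (ii) handling the subtlety that here probabilities are irrelevant and only the support matters, so ``$p_w(j)=0$ for all $j<s$'' is precisely ``no run of the underlying nondeterministic system ends in a configuration defending node $0$'', which is exactly what the emptiness-of-$\sigma$ condition captures. The main obstacle I expect is arranging the simulation so that a \emph{single} letter-by-letter substitution can track \emph{all} reachable configurations rather than just one run: the resolution is that $\phi(0)$ and $\phi(1)$ need not act on a single state symbol but on a block listing the current reachable set, updating it deterministically as a set — but since a finite substitution cannot read unbounded context, one must instead let the nondeterminism of the substitution itself enumerate runs one at a time and rely on the union over all images (and over all words of $b\bin^*c$) to collect the reachable configurations. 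Getting the bookkeeping of this ``nondeterministic branch per run'' encoding to line up so that the $\phi$/$\xi$ image sets differ in the sentinel alone is where the real work lies.
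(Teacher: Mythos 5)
Your choice of reduction source is the same as the paper's (Theorem~\ref{thm:1}), but the construction you sketch does not work, and the part you defer (``where the real work lies'') is in fact the entire proof. Two concrete problems. First, a finite substitution assigns to each letter a fixed finite set of words; there is no ``guard'' by which the image of $c$ (or of $b$) can be made to depend on which run of the NDS the images of the intermediate letters happened to simulate. Any such dependence must be realized purely combinatorially, through which concatenations of image words can spell which strings; that is exactly what the paper's words $\alpha_k,\beta_k$ with $w=\alpha_k\beta_k$ (for every $k$) are engineered for, and it is the part your sketch leaves blank. Second, and fatally for the scheme as stated, equivalence on $L=b\bin^*c$ means $\phi(L)=\xi(L)$ as a union over all of $L$, not word-by-word equality. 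Your argument only separates $\phi(bwc)$ from $\xi(bwc)$ for the same critical $w$ and then ``sums over $L$''. With a fixed sentinel $\sigma$ this collapses: $\sigma\in\xi(bw'c)$ for any non-critical $w'$, and non-critical words always exist (already $bc$, since node $0$ is defended initially), so $\sigma\in\xi(L)$ regardless of reliability; if, as you require, all non-sentinel strings coincide, then $\phi(L)=\xi(L)$ even when the NDS is unreliable, and the reduction proves nothing. (Moreover, a fixed $\sigma$ cannot literally lie in $\phi(bwc)$ for all $w$, since $\epsilon$-freeness forces every element of $\phi(bwc)$ to have length at least $|w|+2$.)

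The paper's construction shows what is actually needed. The only asymmetry is a single extra word, $\phi(b)=\xi(b)\cup\{w\}$, all other letter images being identical, and the same output string can arise from different factorizations. For the reliable direction one proves the word-by-word inclusion $\phi(x)\subseteq\xi(x)$ for every $x\in L$ by re-bracketing factorizations (using $w=\alpha_k\beta_k$; when the $\phi$-factorization begins $w(ww)\cdots(ww)$ one uses a positive-probability run with $\sum_i z_i=0$, i.e.\ node $0$ still defended, to re-align the borders), and inclusion of the unions follows trivially. For the unreliable direction one exhibits a concrete witness: for a critical $x'$ of length $n$, the string $w^{2n+2}$ lies in $\phi(bx'c)$, and a prefix-alignment argument shows that any $\xi$-factorization of a power of $w$ is forced to spell out a positive-probability run of the NDS whose $z$-values sum to $0$, which criticality forbids. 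So the ``node $0$ defended'' condition is encoded as an arithmetic condition on how many copies of $w$ a factorization produces, not as a gate on a sentinel. Your proposal would need to be reworked along these lines before it constitutes a proof.
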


\begin{proof}
We shall use Theorem \ref{thm:1}. Let $V=(K,H,\bin)$ be a NDS defined
in previous section, $K=\{1,..,s\}$, $H$ is the set of instructions and
attacking symbol set is $\bin$.
We shall define two finite substitutions
$\phi,\xi:\{b,0,1,c\}^*\to \bin^*$ such that $\phi$ and $\xi$ are equivalent
on language $b\bin^*c$ if and only if NDS $V$ is reliable.

First we define following sets and words:
\begin{align*}
D_a&=\{(k,z,j)\mid (k,a,j, z,p)\in H \text{ for some }p>0\}, \quad a\in\bin,\\
D&=D_0\cup D_1,\\
w&=010010001\dots10^{s+1}1,\quad w^0=\epsilon, \quad w^1=w,\quad w^2=ww,\\
\alpha_k&=01001\dots10^k1,\quad \beta_k=0^{k+1}1\dots10^{s+1}1, \text{ for }  1\le k\le s,\\
w&=\alpha_k\beta_k, \quad F(k,z,j)=\beta_k w^{z+1}\alpha_j ,\quad
F(k,z)=F(k,z,j)\beta_j=\beta_kw^{z+2},\\ T_a&=\bigcup_{(k,z,j)\in
D_a} \{F(k,z,j)\},\quad C_a=\bigcup_{(k,z,j)\in D_a} \{F(k,z)\},
\quad a\in\bin \\
C&=C_0\cup C_1,\:  M=\{w\},\:  B=\{ww\},\: N=\{\beta_k\mid 1\le
k\le s\},\text{ and } S=\{\alpha_1\}.
\end{align*}

Now we can define finite substitutions $\phi,\xi:\{b,0,1,c\}\to \bin^*$:
\begin{align*}
\xi(b)&=S\cup MN=\{\alpha_1\}\cup \{w\beta_k \mid 1\le k\le s\},\\
\phi(b)&=\xi(b)\cup M=\{\alpha_1\}\cup \{w\beta_k \mid 1\le k\le s\}\cup \{w\},\\
\xi(c)&=\phi(c)=M\cup NM=\{w\}\cup \{\beta_kw \mid 1\le k\le s\},\\
\xi(a)&=\phi(a)=B\cup T_a\cup NT_a\cup C_aN\cup NC_aN\\
&=\{ww\}\cup \{\beta_kw^{z+1}\alpha_j\mid (k,z,j)\in D_a\}\\
&\cup \{\beta_\ell\beta_k w^{z+1}\alpha_j\mid 1\le \ell\le s,(k,z,j)\in D_a\}\\
&\cup \{\beta_kw^{z+2}\beta_\ell\mid 1\le \ell\le s,(k,z,j)\in D_a\}\\
&\cup \{\beta_{\ell_1}\beta_kw^{z+2}\beta_{\ell_2}\mid 1\le \ell_1,\ell_2\le s,
(k,z,j)\in D_a\},
\end{align*}
for $a=0,1$.
Let $L$ be the language $b\bin^*c$. Now clearly $\xi(x)\subseteq \phi(x)$ for
all $x\in L$, since $\xi(a)\subseteq \phi(a)$ for all letters
$a\in \{b,0,1,c\}$. Therefore to prove that $\xi(L)=\phi(L)$ iff and only
iff $V$ is reliable, we have show
that $\phi(L)\in\xi(L)$ iff and only iff $V$ is reliable.

\vali
Suppose first that $V$ is reliable. Let $x=x_0\dots x_{n+1}\in L$,
$u=u_0\dots u_{n+1}$, where $x_i\in \{b,0,1,c\}$ and $u_i\in \phi(x_i)$ for
all integers $0\le i\le n+1$. Note that $x_0=b$ and $x_{n+1}=c$. We have to
show that there exists $v_i\in \xi(x_i)$ for all $0\le i\le n+1$ such that
$v=v_0\dots v_{n+1}=u$.

First we note that the only difference in images by $\xi$ and $\phi$ is
in images of $b$, and $\xi(b)\setminus \phi(b)=M$. Therefore,if $u_0\ne w$, we have trivial solution $u_i=v_i$ for all $0\le i\le n+1$.
So we assume that $u_0=w$.

We shall use parenthesis to illustrate
factorizations by $\phi$ and $\xi$ to $u_i$'s and $v_i$'s. Now we divide
into three cases:

\vali
(i) If $n=0$, then $x=bc$ and we have two cases:

1) If $u_1=w\in \phi(c)$, then $u_0u_1=(w)(w)=(\alpha_1)(\beta_1w)\in\xi(x)$.

2) If $u_1\in NM\subseteq\phi(c)$, i.e. for some $1\le k\le s$,
$u_0u_1=(w)(\beta_kw)=(w\beta_k)(w)\in\xi(x)$.

\vali
(ii) If $n\ge 1$ and $u_1\notin B$. We shall show that there is a
factorization such that $u_i=v_i$ for $2\le i\le n+1$ and $u_0u_1=v_0v_1$.
Here we have four cases:

1) If $u_1\in T_{x_1}$, then, for $(k,z,j)\in D_{x_1}$,
$$
u_0u_1=(w)(\beta_kw^{z+1}\alpha_j)=(\alpha_1)(\beta_1\beta_kw^{z+1}\alpha_j)
=v_0v_1, \quad v_0\in S,v_1\in NT_{x_1}.
$$

2) If $u_1\in NT_{x_1}$, then, for $(k,z,j)\in D_{x_1}$ and $1\le \ell\le s$,
$$
u_0u_1=(w)(\beta_\ell\beta_kw^{z+1}\alpha_j)=(w\beta_\ell)(\beta_kw^{z+1}\alpha_j)
=v_0v_1, \quad v_0\in MN,v_1\in T_{x_1}.
$$

3) If $u_1\in C_{x_1}N$, then, for $(k,z,j)\in D_{x_1}$ and $1\le \ell \le s$,
$$
u_0u_1=(w)(\beta_kw^{z+2}\beta_\ell)=(\alpha_1)(\beta_1\beta_kw^{z+2}\beta_\ell)
=v_0v_1, \quad v_0\in S,v_1\in NC_{x_1}N.
$$

4) If $u_1\in NC_{x_1}N$, then, for $(k,z,j)\in D_{x_1}$ and $1\le \ell,t \le s$,
$$
u_0u_1=(w)(\beta_\ell\beta_kw^{z+2}\beta_t)=(w\beta_\ell)(\beta_kw^{z+2}\beta_t)
=v_0v_1, \quad v_0\in MN,v_1\in C_{x_1}N.
$$

\vali
(iii) If $n\ge 1$ and $u_1\in B$, then we need the reliability of $V$. Let
$t=\min\{i\mid i\ge1, u_i\notin B\}$. So the word
$u_0u_1\dots u_{t-1}=w(ww)\dots (ww) = w^{2t-1}$.

Since $V$ is reliable, there exists for attacking sequence
$x'=x_1\dots x_{t-1}\in \bin^*$ a sequence
$$
(j_0=1,x_1,j_1,z_1,p_1)(j_1,x_2,j_2,z_2,p_2)\dots
(j_{t-2},x_{t-1},j_{t-1},z_{t-1},p_{t-1})
$$
of elements of $H$ such that $p_i>0$ for all $1\le i\le t-1$ and
\begin{equation}\label{eq:3}
\sum_{i=1}^{t-1} z_i=0.
\end{equation}
Therefore there exists a sequence
$$
(j_0=1,z_1,j_1)(j_1,z_2,j_2)\dots
(j_{t-2},z_{t-1},j_{t-1}),
$$
where $(j_{i-1},z_i,j_i)\in D_{x_1}$. Now define $v_0=\alpha_1$, and
for $1\le i\le t-1$,
$$
v'_i=\beta_{j_{i-1}}w^{z_i+1}\alpha_{j_i}\in T_{x_i},
$$
we get that
\begin{align*}
v_0v'_1\dots v'_{t-1}&=\alpha_1\beta_1w^{z_1+1}\alpha_{j_1}
\beta_{j_1}w^{z_2+1}\alpha_{j_2}\dots \beta_{j_{t-2}}w^{z_{t-1}+1}\alpha_{j_t-1}\\
&=ww^{z_1+1}ww^{z_2+1}w\dots ww^{z_{t-1}+1}\alpha_{j_{t-1}}.
\end{align*}
Now by \eqref{eq:3} we get that
$$
v_0v'_1\dots v'_{t-1}=w^{2t-2}\alpha_{j_{t-1}}.
$$
So we have that $u_0u_1\dots u_{t-1}=v'_0v'_1\dots v'_{t-1}\beta_{j_{t-1}}$.
We may already set $v_i=v'_i$ for $1\le i\le t-2$.

Now we have two cases depending on $t$.
First if $t=n+1$, then
we have two cases:

1) If $u_{n+1}\in M$, then $v_{t-1}=v'_{t-1}$
and $v_{n+1}=\beta{j_{t-1}}w\in NM$ and
so $u=v$.

2) If $u_{n+1}\in NM$, $u_{n+1}=\beta_kw$, then we set
$$
v_{t-1}=v_n=\beta_{j_{t-2}}w^{z_{t-1}+2}\beta_k\in C_{x_n}N
\quad \text{ and } v_{n+1}=w\in M.
$$
Again $u=v$.

Second case is that $t\le n$. Then we set $v_i=u_i$ for $t+1\le i\le n+1$
and so we have four cases for $v_t$ and $v_{t-1}$:

1) If $u_{t}\in T_{x_t}$, for some $(k,z,j)\in D_{x_t}$
$u_t=\beta_kw^{z+1}\alpha_j$, then we set
$$
v_{t-1}=v'_{t-1} \text{ and }
v_t=\beta_{j_{t-1}}\beta_kw^{z+1}\alpha_j\in NT_{x_t},
$$
to get $u=v$.

2) If $u_{t}\in NT_{x_t}$, for some $(k,z,j)\in D_{x_t}$, $1\le \ell\le s$,
$u_t=\beta_\ell\beta_kw^{z+1}\alpha_j$, then we set
\begin{align*}
v_{t-1}&=\beta_{j_{t-2}}w^{z_{t-1}+1}\alpha_{j_{t-1}}\beta_{j_{t-1}}\beta_\ell
=\beta_{j_{t-2}}w^{z_{t-1}+2}\beta_\ell\in C_{x_t}N\\
\intertext{and}
v_t&=\beta_kw^{z+1}\alpha_j\in T_{x_t},
\end{align*}
to get $u=v$.

3) If $u_{t}\in C_{x_t}N$, for some $(k,z,j)\in D_{x_t}$, $1\le \ell\le s$,
$u_t=\beta_kw^{z+2}\beta_\ell$, then we set
$$
v_{t-1}=v'_{t-1} \text{ and }
v_t=\beta_{j_{t-1}}\beta_kw^{z+2}\beta_\ell \in NC_{x_t}N,
$$
to get $u=v$.

4) If $u_{t}\in NC_{x_t}N$, for some $(k,z,j)\in D_{x_t}$, $1\le \ell,t\le s$,
$u_t=\beta_\ell\beta_kw^{z+2}\beta_t$, then we set
\begin{align*}
v_{t-1}&=\beta_{j_{t-2}}w^{z_{t-1}+1}\alpha_{j_{t-1}}\beta_{j_{t-1}}\beta_\ell
=\beta{j_{t-2}}w^{z_{t-1}+2}\beta_\ell \in C_{x_t}N \\
\intertext{and}
v_t=\beta_kw^{z+2}\beta_t \in C_{x_t}N,
\end{align*}
to get $u=v$.

Now we have proved that if $V$ reliable then $\phi(L)\subseteq\xi(L)$.

\vali
Assume now that $V$ is unreliable, i.e. there is a word $x'=x_1\dots x_n$
such that $p_{x'}(j)=0$, for all $1\le j\le s$. Let
$x=bx'c=x_0x_1\dots x_n x_{n+1}\in L$. We shall first prove next claim

\vali
{\it Claim.} There is no elements $v'_i\in T_{x_i}$ for all $1\le i\le n$ such that
$w^{2n+1}=\alpha_1v'_1v'_2\dots v'_n\beta_j$.

\vali
{\it Proof of The Claim.} Assume the contrary. This means that there exists sequence
$$
y=\beta_1w^{z_1+1}\alpha_{j_1}\beta_{j_1}w^{z_2+1}\alpha_{j_2}\cdots
\beta_{j_{n-1}}w^{z_n+1}\alpha_{j_n}
$$
such that
$$
(1,,x_1,j_{1},z_1,p_1)(j_1,x_2,j_2,z_2,p_2)\cdots (j_{n-1},x_n,j_{n},z_n,p_n)
$$
is a sequence in $H$, $p_i>0$ for all $i$, and
$$
\alpha_1 y\beta_{j_n}=w^{2n+1}
$$
Now to get the number of $w$ correct on the left hand side, we must have
$$
1+(z_1+1)+1+(z_2+1)+\dots+1+(z_n+1)+1=\sum_{i=1}^n z_i +2n+1=2n+1,
$$
so $\sum_{i=1}^n z_i=0$, but this contradicts the fact that $x'$ is
critical word. This ends the proof of the claim.

\vali
Clearly $w^{2n+2}\in \phi(x)$ and we shall next show that
$w^{2n+2}\notin \xi(x)$. Assume contrary that $w^{2n+2}\in \xi(x)$,
then for all $0\le i\le n+1$ there exists $v_i\in \xi(x_i)$ such that
$w^{2n+2}=v_0\dots v_{n+1}$. Clearly the case $v_0=\alpha_1\in S$ is
only possible, since $v_0=w\beta_j\in MN$ leads to a contradiction.
Assume that $x_1=a\in \bin$ and let
$$
P=\{u\mid u\text{ is a prefix of }w^k \text{ for some integer }k\}.
$$
We divide the proof to five cases according to $v_1$:

\vali
1) If $v_1\in B$, i.e. $v_1=ww$, then $v_0v_1=\alpha_1ww\notin P$.

2) If $v_1\in NT_a$, i.e. for some $1\le \ell\le s$ and $(k,z,j)\in D_a$
$v_1=\beta_\ell\beta_kw^{z+1}\alpha_j$, then $v_0v_1\notin P$.

3) If $v_1\in C_aN$, i.e. for some $1\le \ell\le s$ and $(k,z,j)\in D_a$
$v_1=\beta_kw^{z+2}\beta_\ell$, then $v_0v_1\notin P$.

4) If $v_1\in NC_aN$, i.e. for some $1\le \ell,t\le s$ and $(k,z,j)\in D_a$
$v_1=\beta_\ell\beta_kw^{z+2}\beta_t$, then $v_0v_1\notin P$.

5) If $v_1\in T_a$, then let $t=\min\{i\mid v_i\notin T_{x_i}, 1\le i\le n\}$.
Now if $v_0v_1\dots v_{t-1}\in P$, then $v_0v_1\dots v_{t-1}=w^r\alpha_j$ for some integers $r$ and $j$,
where $1\le j\le s$.

Assume now that $t=n$. If now $v_{n+1}=w$, then $v_0v_1\dots v_nv_{n+1}\notin P$,
and if $v_{n+1}=\beta_j w\in NM$, by Claim above $v_0v_1\dots v_nv_{n+1}\ne
w^{2n+2}$ and so necessarily $t<n$.

Now we have four cases on whether $v_t\in B$, $v_t\in NT_{x_t}$,
$v_t\in C_{x_t}$ or $v_t\in NC_{x_t}N$, but like cases 1-4 above, these
cases lead to contradiction, since $v_0v_1\dots v_t\notin P$.

So we have proved that $w^{2n+2}\notin \xi(x)$ and therefore the prove
of the theorem is completed.

\end{proof}

\paragraph{Acknowledgement.} I am grateful to Juhani Karhum\"aki for guiding me to this topic originally, and especially on his support in the beginning of my career. I also sincerely thank Tero Harju for his support and especially on his comments -- comments on this manuscript, comments on my work in general and all the comments off any research topics we have had together.    



\begin{thebibliography}{}
\bibitem{HaHa:99}
V.~Halava and T.~Harju, \emph{Undecidability of the equivalence of finite substitutions on regular language},  RAIRO Theor. Informatics Appl. {\bf 33}(2), 117--124, 1999.


\bibitem{morphisms}T. Harju and J. Karhum{\"a}ki,
   \emph{Morphisms},
Handbook of Formal Languages (G. Rozenberg and A. Salomaa eds.),
    vol. 1, Springer-Verlag, 1997.

\bibitem{handbook} T. Harju and J. Karhum{\"a}ki,
\emph{Finite transducers and rational transductions}, 
Handbook of Automata Theory (J.-E.~Pin eds.),
Vol. I Theoretical Foundations, 79--112, European Mathematical Society Publishing House, 2021. 

\bibitem{Iba:78}
O. H. Ibarra, \emph{The unsolvability of the equivalence problem for $\epsilon$-free
  {NGSM}'s with unary input (output) alphabet and applications}, SIAM J. of Comput. {\bf 7} no. 4, 524-- 532, 1978.



\bibitem{KaLi1}
J.~Karhum{\"{a}}ki and
               L. P. Lisovik,
\emph{The Equivalence Problem of Finite Substitutions on $ab^*c$, with Applications}, ICALP 2002, Lecture Notes in Comput. Sci. {\bf 2380}, 812--820, 2002.


\bibitem{KaLi2}
J.~Karhum\"aki and  L.~P. Lisovik,
\emph{The Equivalence Problem of Finite Substitutions on $ab^*c$, with Applications}, Int. J. Found. Comput. Sci. 14(4): 699--710, 2003.

\bibitem{Kunc}
M.~Kunc,\emph{The Simplest Language Where Equivalence of Finite Substitutions Is Undecidable},
Fundamentals of Computation Theory (FCT), Lecture Notes in Comput. Sci. {\bf 4639}, 365--375, 2007.

\bibitem{Lis:83}
L. P. Lisovik, \emph{Minimal undecidable identity problem for finite-automaton mappings},
Cybernetics {\bf 19}, no 2, 160--165, 1983.



\bibitem{Lis:91}
L. P. Lisovik,
\emph{An undecidable problem for countable {M}arkov chains}, Cybernetics {\bf 27}, no. 2,
163--169, 1991.



\bibitem{Lis:97}
L. P. Lisovik,
\emph{Nondeterministic systems and finite substitutions on regular language},
Bulletin of the {EATCS} 63, 156--160, 1997.
 
 
\bibitem{Post:46}
E. Post,
\emph{A variant of a recursively unsolvable problem},
 Bull. of Amer. Math. Soc.
 {\bf 52}, 264--268, 1946.

\end{thebibliography}
\end{document}